\definecolor{darkred}{rgb}{0.7, 0.0, 0.0}
\newcommand{\comment}[1]{}
\newcounter{exacounter}
\newenvironment{exa}{
\refstepcounter{exacounter}
\smallskip\noindent
\textbf{Example \theexacounter.}
}{\vspace{2mm}}
\newcommand{\nd}{3cm} 
\tikzset{initial text={}}
\tikzset{every picture/.style=semithick} 
\tikzset{>=stealth'} 
\tikzset{->} 
\tikzset{shorten >=1pt} 
\newcommand{\buchi}{B\"uchi\xspace}
\newcommand{\win}{\mathsf{win}}
\newcommand{\B}{\mathbb{B}}
\newcommand{\N}{\mathbb{N}}
\newcommand{\NI}{\mathbb{N}^{\infty}}
\newcommand{\design}{\mathcal{D}}
\newcommand{\shield}{\mathcal{S}}
\newcommand{\game}{\mathcal{G}}
\newcommand{\gstates}{G}
\newcommand{\ginit}{g_0}
\newcommand{\states}{Q}
\newcommand{\init}{q_0}
\newcommand{\din}{I}
\newcommand{\dinalph}{\Sigma_I}
\newcommand{\dinletter}{{\sigma_I}}
\newcommand{\dintrace}{{\overline{\sigma_I}}}
\newcommand{\dout}{O}
\newcommand{\doutalph}{\Sigma_O}
\newcommand{\doutletter}{{\sigma_O}}
\newcommand{\douttrace}{{\overline{\sigma_O}}}
\newcommand{\dalph}{\Sigma}
\newcommand{\dletter}{\sigma}
\newcommand{\dtrace}{\overline{\dletter}}
\newcommand{\lang}{L}
\newcommand{\langset}{\mathcal{L}}
\newcommand{\spec}{\varphi}
\newcommand{\specv}{\varphi^v}
\newcommand{\distt}{d^\sigma}
\newcommand{\distl}{d^L}
\newcommand{\kin}{\!\in\!}
\newcommand{\err}{{\color{darkred}$\lightning$}}
\newcommand{\comp}{\circ}
\newif \ifextended
\begin{document}
\extendedtrue
\frontmatter          
\pagestyle{plain}
\mainmatter              
\title{Shield Synthesis:}
\subtitle{Runtime Enforcement for Reactive Systems%
\thanks{This work was supported in part by the Austrian Science Fund
  (FWF) through the research network RiSE (S11406-N23) and by the
  European Commission through project STANCE (317753). Chao Wang is
  supported by the National Science Foundation grant CNS-1128903.}}
\titlerunning{Shield Synthesis}  
%
\author{Roderick Bloem\inst{1} \and
        Bettina K\"onighofer\inst{1} \and
        Robert K\"onighofer\inst{1} \and
        Chao Wang\inst{2}}
\authorrunning{Roderick Bloem et al.} 
%
%
\institute{
   $^1$ IAIK, Graz University of Technology,  Austria\\
   $^2$ Department of ECE, Virginia Tech, Blacksburg, VA 24061, USA
          }

\maketitle              

\begin{abstract}
Scalability issues may prevent users from verifying critical properties 
of a complex hardware design.  In this situation, we propose to 
synthesize a ``safety shield'' that is attached to the design to enforce 
the properties at run time.  \emph{Shield synthesis} can succeed where 
model checking and reactive synthesis fail, because it only considers a 
small set of critical properties, as opposed to the complex design, or 
the complete specification in the case of reactive synthesis. The shield 
continuously monitors the input/output of the design and corrects its 
erroneous output only if necessary, and as little as possible, so other 
non-critical properties are likely to be retained.  Although runtime 
enforcement has been studied in other domains such as action systems, 
reactive systems pose unique challenges where the shield must act 
without delay.  We thus present the first shield synthesis solution for 
reactive hardware systems and report our experimental results.
\ifextended
This is an extended version of~\cite{conference}, featuring an 
additional appendix.
\fi
\end{abstract}

\section{Introduction}

Model checking~\cite{Clarke81,Quiell81} can formally verify that a 
design satisfies a temporal logic specification.  Yet, due to 
scalability problems, it may be infeasible to prove all critical 
properties of a complex design.  Reactive 
synthesis~\cite{Pnueli89,BloemJPPS12} is even more ambitious since it 
aims to generate a provably correct design from a given specification. 
In addition to scalability problems, reactive synthesis has the drawback 
of requiring a complete specification, which describes every aspect of 
the desired design.  However, writing a complete specification can 
sometimes be as hard as implementing the design itself.

\begin{wrapfigure}[7]{r}{0.45\textwidth}
\vspace{-0.6cm}
\includegraphics[width=0.45\textwidth]{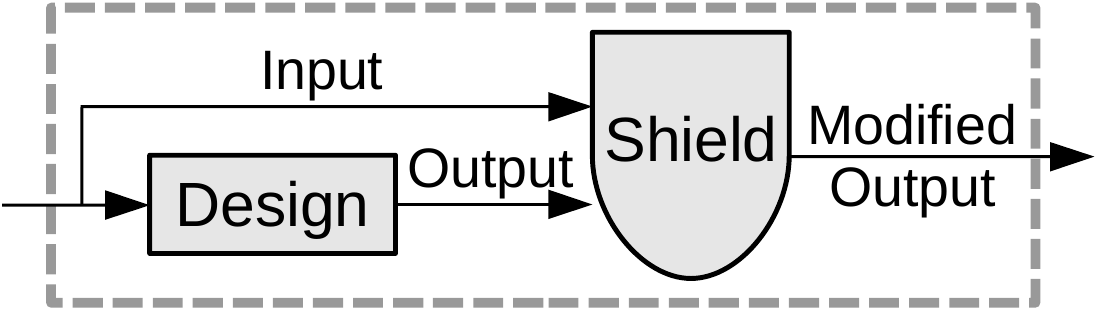}
\caption{Attaching a safety shield.\\}
\label{fig:attach_shield}
\end{wrapfigure}

We propose \emph{shield synthesis} as a way to complement model checking 
and reactive synthesis.  Our goal is to enforce a small set of critical 
properties at runtime even if these properties may occasionally be 
violated by the design.  Imagine a complex design and a set of 
properties that cannot be proved due to scalability issues or other 
reasons (e.g., third-party IP cores). In this setting, we are in good 
faith that the properties hold but we need to have certainty.  We would 
like to automatically construct a component, called the \emph{shield}, 
and attach it to the design as illustrated in 
Fig.~\ref{fig:attach_shield}. The shield monitors the input/output of 
the design and corrects the erroneous output values instantaneously, but 
only if necessary and as little as possible.

The shield ensures both \emph{correctness} and \emph{minimum 
interference}.  By correctness, we mean that the properties must be 
satisfied by the combined system, even if they are occasionally violated 
by the design.  By minimum interference, we mean that the output of the 
shield deviates from the output of the design only if necessary, and the 
deviation is kept minimum.  The latter requirement is important because 
we want the design to retain other (non-critical) behaviors that are not 
captured by the given set of properties.  We argue that shield synthesis 
can succeed even if model checking and reactive synthesis fail due to 
scalability issues, because it has to enforce only a small set of 
critical properties, regardless of the implementation details of a 
complex design. 

This paper makes two contributions. First, we define a general framework 
for solving the shield synthesis problem for reactive hardware systems. 
Second, we propose a new synthesis method, which automatically 
constructs a shield from a set of safety properties. To minimize 
deviations of the shield from the original design, we propose a new 
notion called \emph{$k$-stabilization}:  When the design arrives at a 
state where a property violation becomes unavoidable for some possible 
future inputs, the shield is allowed to deviate for at most $k$ 
consecutive steps.  If a second violation happens during the $k$-step 
recovery phase, the shield enters a \emph{fail-safe} mode where it only 
enforces correctness, but no longer minimizes the deviation. We show 
that the $k$-stabilizing shield synthesis problem can be reduced to 
\emph{safety games}~\cite{Mazala01}. Following this approach, we present a 
proof-of-concept implementation and give the first experimental 
results.

Our work on shield synthesis can complement model checking by enforcing 
any property that cannot be formally proved on a complex design.  There 
can be more applications.  For example, we may not trust third-party IP 
components in our system, but in this case, model checking cannot be 
used because we do not have the source code. Nevertheless, a shield can 
enforce critical interface assumptions of these IP components at run 
time.
Shields may also be used to simplify certification.  Instead of 
certifying a complex design against critical requirements, we can 
synthesize a shield to enforce them, regardless of the behavior of the 
design.  Then, we only need to certify this shield, or the synthesis 
procedure, against the critical requirements.
Finally, shield synthesis is a promising new direction for synthesis in 
general, because it has the strengths of reactive synthesis while 
avoiding its weaknesses --- the set of critical properties can be small 
and relatively easy to specify --- which implies scalability and 
usability.

\noindent \textbf{Related work.}  
Shield synthesis is different from recent works on reactive 
synthesis~\cite{Pnueli89,BloemJPPS12,EhlersT14}, which revisited 
Church's problem~\cite{Church63,Buchi69,Rabin72} on constructing correct 
systems from logical specifications. Although there are some works on 
runtime enforcement of properties in other 
domains~\cite{Schneider00,LigattiBW09,FalconeFM12}, they are based on 
assumptions that do not work for reactive hardware systems. 
Specifically, Schneider~\cite{Schneider00} proposed a method that simply 
halts a program in case of a violation.  Ligatti et 
al.~\cite{LigattiBW09} used edit automata to suppress or insert actions, 
and Falcone et al.~\cite{FalconeFM12} proposed to buffer actions and 
dump them once the execution is shown to be safe. None of these 
approaches is appropriate for reactive systems where the shield must act 
upon erroneous outputs on-the-fly, i.e., without delay and without 
knowing what future inputs/outputs are.  In particular, our shield 
cannot insert or delete time steps, and cannot halt in the case 
of a violation.

Methodologically, our new synthesis algorithm builds upon the existing 
work on synthesis of robust systems~\cite{BloemCGHHJKK14}, which aims to 
generate a complete design that satisfies as many properties of a 
specification as possible if assumptions are violated.  However, our 
goal is to synthesize a shield component $S$, which can be attached to 
any design $D$, to ensure that the combined system $(S\comp D)$ 
satisfies a given set of critical properties.  Our method aims at 
minimizing the ratio between shield deviations and property violations 
by the design, but achieves it by solving pure safety games.  
Furthermore, the synthesis method in \cite{BloemCGHHJKK14} uses 
heuristics and user input to decide from which state to continue 
monitoring the environmental behavior, whereas we use a subset 
construction to capture all possibilities to avoid unjust verdicts by 
the shield.  We use the notion of $k$-stabilization to quantify the 
shield deviation from the design, which has similarities to Ehlers and 
Topcu's notion of $k$-resilience in robust synthesis~\cite{EhlersT14} 
for GR(1) specifications~\cite{BloemJPPS12}.  However, the context of 
our work is different, and our $k$-stabilization limits the length of 
the recovery period instead of tolerating bursts of up to $k$ glitches.

\noindent \textbf{Outline.}  
The remainder of this paper is organized as follows.  We illustrate the 
technical challenges and our solutions in Section~\ref{sec:ex} using an 
example.  Then, we establish notation in Section~\ref{sec:prelim}.  We 
formalize the problem in a general framework for shield synthesis in 
Section~\ref{sec:frame}, and present our new method in 
Section~\ref{sec:sol}.  We present our experimental results in 
Section~\ref{sec:exp} and, finally, give our conclusions in 
Section~\ref{sec:conc}.

\section{Motivation}
\label{sec:ex}

In this section, we illustrate the challenges associated with shield 
synthesis and then briefly explain our solution using an example.  
We start with a traffic light controller that handles a single crossing 
between a highway and a farm road.  There are red (\textsf{r}) or green 
(\textsf{g}) lights for both roads.  An input signal, denoted 
$\textsf{p}\in\{0,1\}$, indicates whether an emergency vehicle is 
approaching.  The controller takes $\textsf{p}$ as input and returns 
$\textsf{h,f}$ as output.  Here, $\textsf{h}\in\{r,g\}$ and 
$\textsf{f}\in\{r,g\}$ are the lights for highway and farm road, 
respectively.
Although the traffic light controller interface is simple, the actual 
implementation can be complex.  For example, the controller may have to 
be synchronized with other traffic lights, and it can have input sensors 
for cars, buttons for pedestrians, and sophisticated algorithms to 
optimize traffic throughput and latency based on all sensors, the time 
of the day, and even the weather.  As a result, the actual design may 
become too complex to be formally verified.  Nevertheless, we want to 
ensure that a handful of safety critical properties are satisfied with 
certainty.  Below are three example properties:
\begin{enumerate}
\item The output \textsf{gg} --- meaning that both roads have green
  lights --- is never allowed.
\item If an emergency vehicle is approaching ($\textsf{p}=1$), the
  output must be \textsf{rr}.
\item The output cannot change from \textsf{gr} to \textsf{rg}, or
  vice versa, without passing \textsf{rr}.
\end{enumerate}
We want to synthesize a safety shield that can be attached to any 
implementation of this traffic light controller, to enforce these 
properties at run time.

In a first exercise, we only consider enforcing Properties 1 and 2.  
These are simple invariance properties without any temporal aspects.  
Such properties can be represented by a truth table as shown in 
Fig.~\ref{fig:traffic_spec2} (left). We use 0 to encode \textsf{r}, and 
1 to encode \textsf{g}.  Forbidden behavior is marked in bold red.  The 
shield must ensure both correctness and
minimum interference. That is, it should only change the output for red
entries. 
\begin{wrapfigure}[9]{r}{0.52\textwidth}
\vspace{-0.7cm}
\begin{minipage}{0.21\textwidth}
\centering
{\scriptsize \tt
\begin{tabular}{ccc|cc}\hline
    \textsf{p}~ & \textsf{h}~ & \textsf{f}~ & ~\textsf{h'} & 
~\textsf{f'} \\\hline
    0~ & 0~ &  0~ &  ~0 &  0\\ 
    0~ & 0~ &  1~ &  ~0 &  1\\
    0~ & 1~ &  0~ &  ~1 &  0\\
    0~ & \textcolor{darkred}{\bf 1}~ &\textcolor{darkred}{\bf 1}~ & 
~\textcolor{blue}{\bf 1} & \textcolor{blue}{\bf 0}\\
\hline
    1~ & 0~ &  0~ &  ~0 &  0\\ 
    1~ & \textcolor{darkred}{\bf 0}~ &\textcolor{darkred}{\bf 1}~ & 
~\textcolor{blue}{\bf 0} & \textcolor{blue}{\bf 0}\\
    1~ & \textcolor{darkred}{\bf 1}~ &\textcolor{darkred}{\bf 0}~ & 
~\textcolor{blue}{\bf 0} & \textcolor{blue}{\bf 0}\\
    1~ & \textcolor{darkred}{\bf 1}~ &\textcolor{darkred}{\bf 1}~ & 
~\textcolor{blue}{\bf 0} & \textcolor{blue}{\bf 0}\\
\hline
\end{tabular}
}
\end{minipage}
$\Rightarrow$
\begin{minipage}{.27\textwidth}
\centering
{\scriptsize \tt
$\mathsf{h' = \neg p \wedge h}$\\
$\mathsf{f' = \neg p \wedge \neg h \wedge f}$\\
}
\verb' '\\
\scalebox{.5}{\input{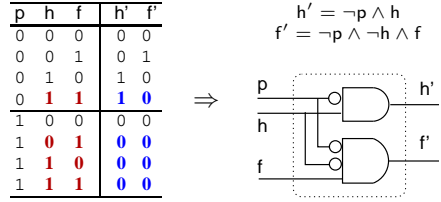}}
\end{minipage}
\caption{Enforcing Properties 1 and 2.}
\label{fig:traffic_spec2}
\end{wrapfigure} 
In particular, it should not ignore the design and hard-wire 
the output to \textsf{rr}.  When $\mathsf{p=1}$ but the output is not 
$\mathsf{rr}$, the shield must correct the output to $\mathsf{rr}$.  
When $\mathsf{p=0}$ but the output is \textsf{gg}, the shield must turn 
the original output \textsf{gg} into either \textsf{rg}, \textsf{gr}, or 
\textsf{rr}.  Assume that \textsf{gr} is chosen.  As illustrated in 
Fig.~\ref{fig:traffic_spec2} (right),  we can construct the transition 
functions $h'=\neg p\wedge h$ and $f'=\neg p \wedge \neg h \wedge f$, as 
well as the shield circuit accordingly.

Next, we consider enforcing Properties 1--3 together.  Property 3 
brings in a temporal aspect, so a simple truth table does not suffice 
any more. Instead, we express the properties by an automaton, which is 
shown in Fig.~\ref{fig:traffic_spec}. Edges are labeled by values of 
$\textsf{phf}$, where $\mathsf{p}\in\{0,1\}$ is the controller's input 
and $\mathsf{h,f}$ are outputs for highway and farm road.  
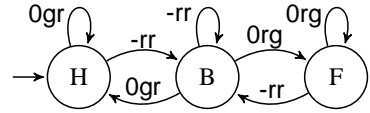
\begin{wrapfigure}[6]{r}{0.45\textwidth}
\vspace{-0.9cm}
\scalebox{0.95}{\begin{tikzpicture}[auto,node distance=\nd]
\node[state,initial]  at  (0,0)       (H) {H};
\node[state]          at  (1.8,0)       (B) {B};
\node[state]          at  (3.6,0)       (F) {F};

\path
(H) edge [loop above]  node[xshift=-4.5mm,yshift=-4mm] {\textsf{0gr}} (H)
(B) edge [loop above]  node[xshift=-4mm,yshift=-3mm] {\textsf{-rr}} (B)
(F) edge [loop above]  node[xshift=-4.5mm,yshift=-4mm] {\textsf{0rg}} (F)
(H) edge [bend left]   node[xshift=0mm,yshift=-1mm] {\textsf{-rr}} (B)
(B) edge [bend left]   node[xshift=0mm,yshift=5mm] {\textsf{0gr}} (H)
(F) edge [bend left]   node[xshift=0mm,yshift=4mm] {\textsf{-rr}} (B)
(B) edge [bend left]   node[xshift=-1mm,yshift=-1mm] {\textsf{0rg}} (F);
\end{tikzpicture}}
\caption{Traffic light specification.}
\label{fig:traffic_spec}
\end{wrapfigure}
There are three non-error states: H denotes the state where highway has 
the green light, F denotes the state where farm road has the green 
light, and B denotes the state where both have red lights.  There is 
also an error state, which is not shown. Missing edges lead to this 
error state, denoting forbidden situations, e.g., \textsf{1gr} is not 
allowed in state H.
Although the automaton still is not a complete specification, the 
corresponding shield can prevent catastrophic failures.  By 
automatically generating a small shield as shown in 
Fig.~\ref{fig:attach_shield}, our approach has the advantage of 
combining the functionality and performance of the aggressively 
optimized implementation with guaranteed safety.

While the shield for Property 1 and 2 could be realized by purely 
combinational logic, this is not possible for the specification in 
Fig.~\ref{fig:traffic_spec}.  The reason is the temporal aspect brought 
in by Property 3.
For example, if we are in state F and observe \textsf{0gg}, which is not 
allowed, the shield has to make a correction in the output signals to 
avoid the violation.  There are two options: changing the output from 
\textsf{gg} to either \textsf{rg} or \textsf{rr}.  However, this fix may 
result in the next state being either B or F.  The question is, without 
knowing what the future inputs/outputs are, how do we decide \emph{from 
which state the shield should continue to monitor} the behavior of the 
design in order to best detect and correct future violations?  If the 
shield makes a wrong guess now, it may lead to a suboptimal 
implementation that causes unnecessarily large deviation in the future.

To solve this problem, we adopt the most conservative approach. That is, 
we assume that the design $\design$ meant to give one of the allowed 
outputs, so either \textsf{rr} or \textsf{rg}. Thus, our shield 
continues to monitor the design from both F and B.  Technically, this is 
achieved by a form of subset construction (see Sec.~\ref{sec:k-stab}), 
which tracks all possibilities for now, and then gradually refines its 
knowledge with future observations.  For example, if the next 
observation is \textsf{0gr}, we assume that the design $\design$ meant 
\textsf{rr} earlier, and so it must be in B and traverse to H.
If it were in F, we could only have explained \textsf{0gr} by assuming a 
second violation, which is less optimistic than we would like to be.  
In this work, we assume that a second violation occurs only if an 
observation is inconsistent with \emph{all} states that it could 
possibly be in.  For example, if the next observation is not 
\textsf{0gr} but \textsf{1rg}, which is neither allowed in F nor in B, 
we know that a second violation occurs.  Yet, after observing 
\textsf{1rg}, we can be sure that we have reached the state B, because 
starting from both F and B, with input $\textsf{p}=1$, the only allowed 
output is \textsf{rr}, and the next state is always B.
In this sense, our construction implements an ``innocent until proved 
guilty'' philosophy, which is key to satisfy the \emph{minimum 
interference} requirement.

To bound the deviation of the shield when a property violation becomes 
unavoidable, we require the shield to deviate for at most $k$ 
consecutive steps after the initial violation.  We shall formalize this 
notion of \emph{$k$-stabilization} in subsequent sections and present 
our synthesis algorithm.
For the safety specification in Fig.~\ref{fig:traffic_spec}, our method 
would reduce the shield synthesis problem into a set of \emph{safety 
games}, which are then solved using standard techniques 
(cf.~\cite{Mazala01}).  We shall present the synthesis results in 
Section~\ref{sec:exp}.

\section{Preliminaries}
\label{sec:prelim}

We denote the Boolean domain by $\B=\{\true,\false\}$, denote the set of 
natural numbers by $\N$, and abbreviate $\N\cup\{\infty\}$ by 
$\N^\infty$.
We consider a reactive system with a finite set 
$\din=\{i_1,\ldots,i_m\}$ of Boolean inputs and a finite set 
$\dout=\{o_1,\ldots,o_n\}$ of Boolean outputs.  The input alphabet is 
$\dinalph=2^\din$, the output alphabet is $\doutalph=2^O$, and 
$\dalph=\dinalph \times \doutalph$. The set of finite (infinite) words 
over $\dalph$ is denoted by $\dalph^*$ ($\dalph^\omega$), and 
$\dalph^{*,\omega} = \dalph^* \cup \dalph^\omega$.  We will also refer 
to words as \emph{(execution) traces}.  We write $|\dtrace|$ for the 
length of a trace $\dtrace\in \dalph^{*,\omega}$. For $\dintrace = x_0 
x_1 \ldots \in \dinalph^\omega$ and $\douttrace = y_0 y_1 \ldots \in 
\doutalph^\omega$, we write $\dintrace || \douttrace$ for the 
composition $(x_0,y_0) (x_1,y_1) \ldots \in \dalph^\omega$. A set $\lang 
\subseteq  \dalph^\omega$ of infinite words is called a \emph{language}. 
We denote the set of all languages as $\langset = 2^{\dalph^\omega}$.

\noindent
\textbf{Reactive Systems.}
A \emph{reactive system} $\design = (\states, \init, \dinalph, 
\doutalph, \delta, \lambda)$ is a Mealy machine, where $\states$ is a 
finite set of states, $\init\in \states$ is the initial state, $\delta: 
\states \times \dinalph \rightarrow \states$ is a complete transition 
function, and $\lambda: \states \times \dinalph \rightarrow \doutalph$ 
is a complete output function.  Given the input trace $\dintrace = x_0 
x_1 \ldots \in \dinalph^\omega$, the system $\design$ produces the 
output trace $\douttrace = \design(\dintrace) = \lambda(q_0, x_0) 
\lambda(q_1, x_1) \ldots \in \doutalph^\omega$, where $q_{i+1} = 
\delta(q_i, x_i)$ for all $i \ge 0$.  The set of words produced by 
$\design$ is denoted $\lang(\design) = \{\dintrace || \douttrace \in 
\dalph^\omega \mid \design(\dintrace) = \douttrace\}$.  We also refer to 
a reactive system $\design$ as a \emph{(hardware) design}.

Let $\design = (\states, \init, \dinalph, \doutalph, \delta, \lambda)$ 
and $\design' = (\states', \init', \dalph, \doutalph, \delta', 
\lambda')$ be reactive systems.  Their serial composition is constructed 
by feeding the input and output of $\design$ to $\design'$ as input.  We 
use $\design \comp \design'$ to denote such a composition 
$(\hat{\states}, \hat{\init}, \dinalph, \doutalph, \hat{\delta}, 
\hat{\lambda})$, where
$\hat{\states} = \states \times \states'$, 
$\hat{\init} = (\init, \init')$,
$\hat{\delta}((q,q'),\dinletter) = (\delta(q,\dinletter),
                   \delta'(q',(\dinletter,\lambda(q,\dinletter))))$, and
$\hat{\lambda}((q,q'),\dinletter) =
                     \lambda'(q',(\dinletter,\lambda(q,\dinletter)))$.   

\noindent
\textbf{Specifications.}
A \emph{specification} $\spec$ defines a set $\lang(\spec) \subseteq 
\dalph^\omega$ of allowed traces.  A specification $\spec$ is 
\emph{realizable} if there exists a design $\design$ that realizes it. 
$\design$ \emph{realizes} $\spec$, written $\design \models \spec$, iff 
$\lang(\design) \subseteq \lang(\spec)$.
We assume that $\spec$ is a (potentially incomplete) set of 
\emph{properties} $\{\spec_1,\ldots,\spec_l\}$ such that $\lang(\spec) = 
\bigcap_i \lang(\spec_i)$, and a design satisfies $\spec$ iff it 
satisfies all its properties.
In this work, we are concerned with a \emph{safety} specification 
$\spec^s$, which is represented by an automaton $\spec^s = (\states, 
\init, \dalph, \delta, F)$, where $\dalph = \dinalph\cup\doutalph$, 
$\delta : \states \times \dalph \rightarrow \states$, and $F\subseteq 
\states$ is a set of safe states.  The \emph{run} induced by trace 
$\dtrace = \dletter_0 \dletter_1 \ldots \in \dalph^\omega$ is the state 
sequence $\overline{q} = q_0 q_1 \ldots $ such that $q_{i+1} = 
\delta(q_i, \dletter_i)$.  Trace $\dtrace$ (of a design $\design$) 
\emph{satisfies} $\spec^s$ if the induced run visits only the safe 
states, i.e., $\forall i\geq 0 \scope q_i \in F$.  The \emph{language} 
$\lang(\spec^s)$ is the set of all traces satisfying $\spec^s$.

\noindent
\textbf{Games.}
A (2-player, alternating) \emph{game} is a tuple $\game = (\gstates, 
\ginit, \dinalph, \doutalph, \delta, \win)$, where $\gstates$ is a 
finite set of game states, $\ginit \in \gstates$ is the initial state, 
$\delta: \gstates \times \dinalph \times \doutalph \rightarrow \gstates$ 
is a complete transition function, and $\win: \gstates^\omega 
\rightarrow \B$ is a winning condition.  The game is played by two 
players: the system and the environment.  In every state $g\in \gstates$ 
(starting with $\ginit$), the environment first chooses an input letter 
$\dinletter \in \dinalph$, and then the system chooses some output 
letter $\doutletter \in \doutalph$. This defines the next state $g' = 
\delta(g,\dinletter, \doutletter)$, and so on. The resulting (infinite) 
sequence $\overline{g} = g_0 g_1 \ldots$ of game states is called a 
\emph{play}.  A play is \emph{won} by the system iff 
$\win(\overline{g})$ is $\true$.

A \emph{safety game} defines $\win$ via a set $F^g\subseteq \gstates$ of 
safe states: $\win(g_0 g_1 \ldots)$ is $\true$ iff $\forall i \geq 0 
\scope g_i \in F^g$, i.e., if only safe states are visited.
A (memoryless) \emph{strategy} for the system is a function $\rho: 
\gstates \times \dinalph \rightarrow \doutalph$. A strategy is 
\emph{winning} for the system if all plays $\overline{g}$ that can be 
constructed when defining the outputs using the strategy satisfy 
$\win(\overline{g})$. The \emph{winning region} is the set of states 
from which a winning strategy exists. We will use safety games to 
synthesize a shield, which implements the winning strategy in a new 
reactive system $\shield = (\gstates, \init, \dinalph, \doutalph, 
\delta', \rho)$ with $\delta'(g,\dinletter) = 
\delta(g,\dinletter,\rho(g,\dinletter))$.

\section{The Shield Synthesis Framework} 
\label{sec:frame}

We define a general framework for shield synthesis in this section 
before presenting a concrete realization of this framework in the next 
section.

\begin{definition}[Shield]
Let $\design = (\states, \init, \dinalph, \doutalph, \delta, \lambda)$ 
be a design, $\spec$ be a set of properties, and $\specv\subseteq \spec$ 
be a valid subset such that $\design \models \specv$.  A reactive system 
$\shield = (\states', \init', \dalph, \doutalph, \delta', \lambda')$ is 
a \emph{shield} of $\design$ with respect to $(\spec\setminus\specv)$ 
iff $(\design \comp \shield) \models \spec$.
\end{definition}

\noindent
Here, the design is known to satisfy $\specv\subseteq \spec$. 
Furthermore, we are in good faith that $\design$ also satisfies $\spec 
\setminus \specv$, but it is not guaranteed.  We synthesize $\shield$, 
which reads the input and output of $\design$ while correcting its 
erroneous output as illustrated in Fig.~\ref{fig:attach_shield}.

\begin{definition}[Generic Shield]
Given a set $\spec = \specv \cup (\spec\setminus\specv)$ of properties.  
A reactive system $\shield$ is a \emph{generic shield} iff it is a 
shield of \emph{any} design $\design$ such that $\design \models 
\specv$.
\end{definition}

\noindent
A generic shield must work for any design $\design \models \specv$. 
Hence, the shield synthesis procedure does not need to consider the 
design implementation.  This is a realistic assumption in many 
applications, e.g., when the design $\design$ comes from the third 
party.  Synthesis of a generic shield also has a scalability advantage 
since the design $\design$, even if available, can be too complex to 
analyze, whereas $\spec$ often contains only a small set of critical
properties.  Finally, a generic shield is more robust against design 
changes, making it attractive for safety certification.  In this work, 
we focus on the synthesis of generic shields.

Although the shield is defined with respect to $\spec$ (more 
specifically, $\spec\setminus\specv$), we must refrain from ignoring the 
design completely while feeding the output with a replacement circuit.  
This is not desirable because the original design may satisfy additional 
(non-critical) properties that are not specified in $\spec$ but should 
be retained as much as possible.  In general, we want the shield to 
deviate from the design \emph{only if necessary, and as little as 
possible}.  For example, if $\design$ does not violate $\spec$, the 
shield $\shield$ should keep the output of $\design$ intact.  This 
rationale is captured by our next definitions.

\begin{definition}[Output Trace Distance Function]
An output trace distance function (OTDF) is a function $\distt: 
\doutalph^{*,\omega} \times \doutalph^{*,\omega} \rightarrow \NI$ such 
that
\begin{enumerate}
\item $\distt(\douttrace,\douttrace') = 0$ 
      when $\douttrace = \douttrace'$; 
\item $\distt(\douttrace\doutletter, \douttrace'\doutletter') = 
       \distt(\douttrace,\douttrace') $ 
      when $\doutletter = \doutletter'$, and
\item $\distt(\douttrace\doutletter, \douttrace'\doutletter') >
       \distt(\douttrace,\douttrace') $ 
      when $\doutletter \neq \doutletter'$.
\end{enumerate}
\end{definition}

\noindent
An OTDF measures the difference between two output sequences (of the 
design $\design$ and the shield $\shield$).  The definition requires 
monotonicity with respect to prefixes: when comparing trace prefixes 
with increasing length, the distance can only become larger.  

\begin{definition}[Language Distance Function]
A language distance function (LDF) is a function $\distl: \langset 
\times \dalph^\omega \rightarrow \NI$ such that 
$\forall \lang \in \langset, \dtrace \in \dalph^\omega \scope 
 \dtrace \in \lang \rightarrow \distl(\lang, \dtrace) = 0$.
\end{definition}

\noindent
An LDF measures the severity of specification violations by the design 
by mapping a language (of $\spec$) and a trace (of $\design$) to a 
number.  Given a trace $\dtrace\in\dalph^\omega$, its distance to 
$L(\spec)$ is 0 if $\dtrace$ satisfies $\spec$.  Greater distances 
indicate more severe specification violations.  An OTDF can (but does 
not have to) be defined via an LDF by taking the minimum output distance 
between $\dtrace = (\dintrace ||\douttrace)$ and any trace in the 
language $L$:
$$\distl(\lang, \dintrace ||\douttrace) =
\left\{
\begin{array}{ll}
\min\limits_{\dintrace || \douttrace' \in \lang} 
  \distt(\douttrace', \douttrace)
  &\quad\text{ if } \exists \douttrace'\in \doutalph^\omega \scope~
                (\dintrace||\douttrace') \in \lang \\
0
  &\quad\text{ otherwise.}
\end{array}
\right.
$$
The input trace is ignored in $\distt$ because the design $\design$ can 
only influence the output.  If no alternative output trace makes the 
word part of the language, the distance is set to $0$ to express that it 
cannot be the design's fault. If $\lang$ is defined by a realizable 
specification $\spec$, this cannot happen anyway, since $\forall 
\dintrace\in \dinalph^\omega \scope \exists \douttrace\in 
\doutalph^\omega \scope (\dintrace || \douttrace) \in \lang(\spec)$ is a 
necessary condition for the realizability of $\spec$.

\begin{definition}[Optimal Generic Shield]
\label{def:ogs}
Let $\spec$ be a specification, $\specv \subseteq \spec$ be the valid 
subset, $\distt$ be an OTDF, and $\distl$ be an LDF.  A reactive system 
$\shield$ is an \emph{optimal generic shield} if and only if for all 
$\dintrace \kin \dinalph^\omega$ and $\douttrace \kin 
\doutalph^\omega$, 
\begin{align}
(\dintrace || \douttrace) \kin \lang(\specv) \rightarrow & 
\bigl(
  \distl\bigl(~ \lang(\spec), 
              \dintrace || \shield(\dintrace || \douttrace) 
       ~\bigr) = 0
\;\wedge  \label{eq:kopt1} 
\\ &
 \distt(\douttrace, \shield(\dintrace || \douttrace)) \leq 
      \distl(\lang(\spec), \dintrace || \douttrace)
\bigr)
. \label{eq:kopt2}
\end{align}
\end{definition}

\noindent
The implication means that we only consider traces that satisfy $\specv$ 
since $\design \models \specv$ is assumed.   This can be exploited by 
synthesis algorithms to find a more succinct shield. 
Part~(\ref{eq:kopt1}) of the implied formula ensures correctness: 
$\design \comp \shield$ must satisfy $\spec$.\footnote{Applying $\distl$ 
instead of ``$\subseteq \lang(\spec)$'' adds flexibility: the user can 
define $\distl$ in such a way that $\distl(\lang, \dtrace) = 0$ even if 
$\dtrace\not\in\lang$ to allow such traces as well.} 
Part~(\ref{eq:kopt2}) ensures minimum interference: ``small'' violations 
result in ``small'' deviations. Def.~\ref{def:ogs} is designed to be 
flexible: Different notions of minimum interference can be realized with 
appropriate definitions of $\distt$ and $\distl$.  One realization will 
be presented in Section~\ref{sec:sol}.

\begin{proposition}
An optimal generic shield $\shield$ cannot deviate from the design's 
output before a specification violation by the design $\design$ is 
unavoidable.
\end{proposition}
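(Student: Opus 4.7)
The plan is to proceed by contradiction. Suppose there exist a trace $\dintrace || \douttrace \in \lang(\specv)$ and a step $i \geq 0$ at which $\shield$ deviates, i.e., $\shield(\dintrace || \douttrace)_i \neq \douttrace_i$, and suppose further that the length-$(i{+}1)$ prefix $p$ of $\dintrace || \douttrace$ still admits some infinite extension $\dintrace^* || \douttrace^*$ lying in $\lang(\spec)$. The latter assumption is the precise negation of `a specification violation is unavoidable after $p$'.

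First I would observe that $\specv \subseteq \spec$ gives $\lang(\spec) \subseteq \lang(\specv)$, so the benign extension $\dintrace^* || \douttrace^*$ also lies in $\lang(\specv)$ and is therefore covered by Definition~\ref{def:ogs}. The defining property of an LDF yields $\distl(\lang(\spec), \dintrace^* || \douttrace^*) = 0$, and then inequality~(\ref{eq:kopt2}) forces $\distt(\douttrace^*, \shield(\dintrace^* || \douttrace^*)) = 0$. Next I would combine the three OTDF axioms to conclude that this vanishing distance forces pointwise equality of the two infinite output sequences; in particular, on the extended trace the shield outputs $\douttrace^*_i = \douttrace_i$ at step $i$.

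To close the contradiction I would invoke the fact that $\shield$, as a Mealy machine, produces its step-$i$ output as a deterministic function of the first $i+1$ letters of its input. Because the traces $\dintrace || \douttrace$ and $\dintrace^* || \douttrace^*$ share exactly the prefix $p$, their shield outputs at step $i$ must coincide. Hence $\shield(\dintrace || \douttrace)_i = \douttrace_i$, contradicting the assumed deviation.

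The main obstacle I foresee is the step from the OTDF axioms, which are stated for single-letter extensions of finite words, to the claim that $\distt = 0$ on two infinite traces forces pointwise agreement. This should be routine once one observes that the sequence of prefix distances is monotone nondecreasing in $\NI$: any single position of disagreement creates a positive prefix distance by axiom~3, and this value can only grow---never decrease---along further extensions by axioms~2 and~3, so the positivity propagates to the infinite limit.
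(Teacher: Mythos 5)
Your proof is correct and follows essentially the same route as the paper's: apply Part~(\ref{eq:kopt2}) of Definition~\ref{def:ogs} to a violation-free infinite extension of the offending prefix and use the prefix-monotonicity of the OTDF to derive a contradiction. You are in fact somewhat more explicit than the paper --- noting that the extension lies in $\lang(\specv)$ so the implication in Definition~\ref{def:ogs} actually applies, and invoking the shield's determinism as a Mealy machine to transfer the conclusion back to the original trace --- and the issue you flag about lifting the OTDF axioms from finite words to infinite ones is equally present (and equally glossed over) in the paper's own argument.
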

\begin{proof}
If there has been a deviation $\distt(\douttrace, \shield(\dintrace || 
\douttrace)) \neq 0$ on the finite input prefix $\dtrace$, but this 
prefix can be extended into an infinite trace $\dtrace'$ such that 
$\distl(\lang(\spec), \dtrace') = 0$, meaning that a violation is 
avoidable, then Part~(\ref{eq:kopt2}) of Def.~\ref{def:ogs} is violated 
because of the (prefix-)monotonicity of $\distt$ (the deviation can only 
increase when the trace is extended), and the fact that $\distt \leq 
\distl$ is $\false$ if $\distt\neq 0$.
\end{proof}

\section{Our Shield Synthesis Method}
\label{sec:sol}

In this section, we present a concrete realization of the shield 
synthesis framework by defining OTDF and LDF in a practical way. We call 
the resulting shield a \emph{$k$-stabilizing} generic shield. While our 
framework works for arbitrary specifications, our realization assumes 
safety specifications.  

\subsection{$k$-Stabilizing Generic Shields}

A $k$-stabilizing generic shield is an optimal generic shield according 
to Def.~\ref{def:ogs}, together with the following restrictions.  When a 
property violation by the design $\design$ becomes unavoidable (in the 
worst case over future inputs), the shield $\shield$ is allowed to 
deviate from the design's outputs for at most $k$ consecutive time 
steps, including the current step.  Only after these $k$ steps, the next 
violation is tolerated. This is based on the assumption that 
specification violations are rare events.  If a second violation happens 
within the $k$-step recovery period, the shield enters a 
\emph{fail-safe} mode, where it enforces the critical properties, but 
stops minimizing the deviations.  
More formally, a $k$-stabilizing generic shield requires the following 
configuration of the OTDF and LDF functions:
\begin{enumerate}
\item The LDF $\distl(\lang(\spec),\dtrace) $ is defined as follows: 
  Given a trace $\dtrace \in \dalph^\omega$, its distance to
  $\lang(\spec)$ is $0$ initially, and increased to $\infty$ when the 
  shield enters the \emph{fail-safe} mode.
\item The OTDF function $\distt(\douttrace, \douttrace')$ returns $0$
  initially, and is set to $\infty$ if $\doutletter_i \neq 
  \doutletter_i'$ outside of a $k$-step recovery period.
\end{enumerate}
To indicate whether the shield is in the fail-safe mode or a recovery 
period, we add  a counter $c\in \{0,\ldots,k\}$. Initially, $c$ is 0. 
Whenever there is a property violation by the design, $c$ is set to $k$ 
in the next step.  In each of the subsequent steps, $c$ decrements until 
it reaches 0 again. The shield can deviate if the next state has $c>0$.  
If a second violation happens when $c>1$, then the shield enters the 
fail-safe mode.  A $1$-stabilizing shield can only deviate in the 
time step of the violation, and can never enter the fail-safe mode.

\subsection{Synthesizing $k$-Stabilizing Generic Shields}
\label{sec:k-stab}

The flow of our synthesis procedure is illustrated in 
Fig.~\ref{fig:k-stab}.  Let $\spec = \{\spec_1,\ldots,\spec_l\}$ be the 
critical safety specification, where each $\spec_i$ is represented as 
an automaton $\spec_i = (Q_i, q_{0,i}, \dalph, \delta_i,$ $F_i)$.  The 
synchronous product of these automata is again a safety automaton.  We 
use three product automata:
$\mathcal{Q} = (Q, q_{0}, \dalph, \delta, F)$ is the product of
all properties in $\spec$;
$\mathcal{V} = (V, v_{0}, \dalph, \delta^v, F^v)$ is the product
of properties in $\specv \subseteq \spec$; and 
$\mathcal{R} = (R, r_{0}, \dalph, \delta^r, F^r)$ is the product
of properties in $\spec\setminus\specv$.
Starting from these automata, our shield synthesis procedure consists of 
five steps.

\begin{figure}[tb]
  \begin{center}
    \includegraphics[width=0.85\textwidth]{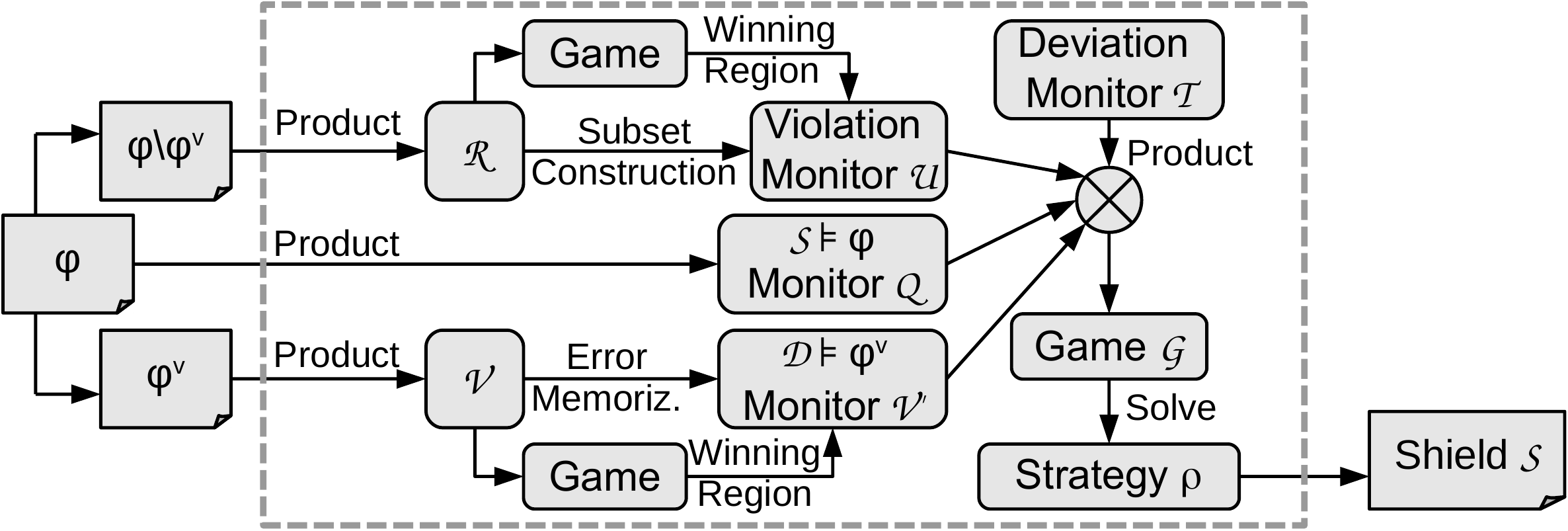}
    \caption{Outline of our $k$-stabilizing generic shield synthesis procedure.}
    \label{fig:k-stab}
  \end{center}
\end{figure}

\noindent
\textbf{Step 1. Constructing the Violation Monitor $\mathcal{U}$:}
From $\mathcal{R}$, which represents $\spec\setminus\specv$, we build 
$\mathcal{U} = (U, u_0, \dalph, \delta^u)$ to monitor property 
violations by the design.  The goal is to identify the latest point in 
time from which a specification violation can still be corrected with a 
deviation by the shield.  This constitutes the start of the recovery 
period.

The first phase of this construction (Step 1-a) is to consider the 
automaton $\mathcal{R} = (R, r_{0}, \dalph, \delta^r, F^r)$  as a 
\emph{safety game} and compute its winning region $W^r\subseteq F^r$.  
The meaning of $W^r$ is such that every reactive system 
$\design\models(\spec \setminus \specv)$ must produce outputs in such a 
way that the next state of $\mathcal{R}$ stays in $W^r$.  Only when the 
next state of $\mathcal{R}$ would be outside of $W^r$, our shield will 
be allowed to interfere.

\begin{exa}
Consider the safety automaton $\mathcal{R}$ in Fig.~\ref{fig:win_ex}, 
where $i$ is an input, $o$ is an output, and $r_x$ is unsafe.  The 
winning region is $W=\{r_0\}$ because from $r_1$ the input $i$ controls 
whether $r_x$ is visited.  The shield must be allowed to deviate from 
the original transition $r_0\rightarrow r_1$ if $o\neq i$. In $r_1$ it 
is too late because visiting an unsafe state cannot be avoided any more, 
given that the shield can modify the value of $o$ but not $i$.  \qed
\end{exa}

\begin{figure}[tb]
\centering
\begin{minipage}{0.38\textwidth}
\scalebox{0.9}{\begin{tikzpicture}[auto,node distance=\nd]
\node[state,initial] at  (0,0)       (r0) {$r_0$};
\node[state]         at  (2,0)       (r1) {$r_1$};
\node[state]         at  (3.4,0)     (rx) {$r_x$};

\path
(r0) edge [loop above]  
     node [xshift=-6mm,yshift=-4mm] {$o=i$} (r0)
(r0) edge               
     node [xshift=0mm,yshift=-5mm]  {$o\neq i$} (r1)
(r1) edge [bend right]
     node [xshift=0mm,yshift=4mm]   {$\neg i$} (r0)
(r1) edge               
     node                           {$i$} (rx)
(rx) edge [loop above]  
     node [xshift=-5mm,yshift=-4mm] {$\true$} (rx)
     
;
\end{tikzpicture}}
\caption{The safety automaton $\mathcal{R}$.}
\label{fig:win_ex}
\end{minipage}
\hspace{0.08\textwidth}
\begin{minipage}{0.5\textwidth}
\scalebox{0.9}{\begin{tikzpicture}[auto,node distance=\nd]
\node[initial,state] at  (0,0)       (t0) {$t_0$};
\node[state]         at  (3,0)       (t1) {$t_1$};

\path 
(t0) edge [loop above] 
     node [xshift=-9mm,yshift=-4mm] {$\doutletter  =  \doutletter'$} (t0)
(t0) edge 
     node [xshift=0mm,yshift=-5mm] {$\doutletter \neq \doutletter'$} (t1)
(t1) edge [loop above]  
     node [xshift=10mm,yshift=-4mm] {$\doutletter \neq \doutletter'$} (t1)
(t1) edge [bend right] 
     node [xshift=1mm,yshift=5mm] {$\doutletter  =  \doutletter'$} (t0)
;
\end{tikzpicture}}
\caption{The deviation monitor $\mathcal{T}$.}
\label{fig:dev_monitor}
\end{minipage}
\end{figure}

The second phase (Step 1-b) is to expand the state space from $R$ to 
$2^{R}$ via a subset construction.  The rationale behind it is as 
follows.  If the design makes a mistake (i.e., picks outputs such that 
$\mathcal{R}$ enters a state $r\not \in W^r$ from which the 
specification cannot be enforced), we have to ``guess'' what the design 
actually meant to do in order to find a state from which we can continue 
monitoring its behavior.  We follow a generous approach in order not to 
treat the design unfairly: we consider all output letters that would 
have avoided falling out of $W^r$, and continue monitoring the design 
behavior from all the corresponding successor states in parallel.  Thus, 
$\mathcal{U}$ is essentially a subset construction of $\mathcal{R}$, 
where a state $u\in U$ of $\mathcal{U}$ represents a set of states in 
$\mathcal{R}$.

The third phase (Step 1-c) is to expand the state space of $\mathcal{U}$ 
by adding a counter $c\in\{0,\ldots,k\}$ as described in the previous 
subsection, and adding a special fail-safe state $u_E$.   
The final violation monitor is $\mathcal{U} = (U, u_0, \dalph, 
\delta^u)$, where $U = (2^{R} \times \{0, \ldots, k\}) \cup {u_E}$ is 
the set of states, $u_0 = (\{r_0\}, 0)$ is the initial state, $\dalph$ 
is the set of input letters, and $\delta^u$ is the next-state function, 
which obeys the following rules:
\begin{enumerate}
\item $\delta^u(u_E, \dletter) = u_E$ 
      (meaning that $u_E$ is a trap state), \label{eq:subset_s}
\item $\delta^u((u,c), \dletter) = u_E$ if $c>1$ and
      $\forall r \in u: \delta^r(r,\dletter) \not\in W^r$, 
      \label{eq:subset_a}
\item $\delta^u((u,c), (\dinletter, \doutletter)) = 
      (\{r' \kin W^r \mid \exists r\in u, \doutletter' \in \doutalph 
         \scope \delta^r(r,(\dinletter,\doutletter')) = r'\}, k)$\\
if $c\leq 1$ and $\forall r \in u \scope 
    \delta^r(r,(\dinletter, \doutletter)) \not\in W^r$, 
and
\label{eq:subset_m}
\item $\delta^u((u,c), \dletter) \!=\! 
      (\{r'\kin W^r |\exists r\kin u \scope\delta^r(r,\dletter) = r'\}, 
       \textsf{dec}(c))$
if $\exists r \kin u \scope \delta^r(r,\dletter) \kin W^r$,
where $\textsf{dec}(0) = 0$ and $\textsf{dec}(c) = c-1$ if $c>0$.  
\label{eq:subset_n}
\end{enumerate}
Our construction sets $c=k$ whenever the design leaves the winning 
region, and not when it enters an unsafe state.  Hence, the shield 
$\shield$ can take remedial action as soon as the ``the crime is 
committed'', before the damage is detected, which would have been too 
late to correct the erroneous outputs of the design.

\begin{exa}
We illustrate the construction of $\mathcal{U}$ using the specification 
from Fig.~\ref{fig:traffic_spec}, 
\begin{wrapfigure}[10]{r}{0.42\textwidth}
\vspace{-0.7cm}
\begin{tabular}{l|cccccccc}
    &1g-   &1rg   &-rr &0gg     &0gr    &0rg    \\
\hline
H   &B\err &B\err &B   &HB\err  &H      &HB\err \\
B   &B\err &B\err &B   &HFB\err &H      &F      \\
F   &B\err &B\err &B   &FB\err  &FB\err &F      \\
HB  &B\err &B\err &B   &HFB\err &H      &F      \\
FB  &B\err &B\err &B   &HFB\err &H      &F      \\
HFB &B\err &B\err &B   &HFB\err &H      &F      \\
\hline
\end{tabular}
\caption{$\delta^u$ for the spec from Fig.~\ref{fig:traffic_spec}.}
\label{fig:traffic_subset}
\end{wrapfigure}
which is a safety automaton if we make all missing edges point to an 
(additional) unsafe state.  The winning region consists of all safe 
states, i.e., $W^r = \{H,B,F\}$.  The resulting violation monitor is 
$\mathcal{U}= 
(\{\text{H},\text{B},\text{F},\text{HB},\text{FB},\text{HFB}\} 
\times\{0, \ldots, k\}\cup u_E, (\text{H},0), \dalph, \delta^u)$, where 
$\delta^u$ is illustrated in Fig.~\ref{fig:traffic_subset} as a table
(the graph would be messy), which lists the next state for all possible 
present states as well as inputs and outputs by the design.  Lightning 
bolts denote specification 
violations. The update of the counter $c$, which is not included in 
Fig.~\ref{fig:traffic_subset}, is as follows: whenever the design 
commits a violation (indicated by lightning) and $c\leq 1$, then $c$ is 
set to $k$. If $c > 1$ at the violation, the next state is $u_E$.  
Otherwise, $c$ is decremented.  \qed

\end{exa}

\noindent
\textbf{Step 2. Constructing the Validity Monitor $\mathcal{V'}$:}
From $\mathcal{V}= (V, v_{0}, \dalph, \delta^v, F^v)$, which represents 
$\specv$, we build an automaton $\mathcal{V'}$ to monitor the validity 
of $\specv$ by solving a safety game on $\mathcal{V}$ and computing the 
winning region $W^v\subseteq F^v$. We will use $W^v$ to increase the 
freedom for the shield: since we assume that 
$\design\models\specv$, we are only interested in the cases where 
$\mathcal{V}$ never leaves $W^v$.  If it does, our shield is 
allowed to behave arbitrarily from that point on.  
We extend the state space from $V$ to $V'$ by adding a bit to 
memorize if we have left the winning region $W^v$.  Hence, the 
validity monitor is defined as $\mathcal{V}' = (V', v_{0}', \dalph, 
{\delta^v}', {F^v}')$, where $V' = \B \times V$ is the set of states, 
$v_{0}' = \{\false, v_0\}$ is the initial state, 
${\delta^v}'((b,v),\dletter) = (b', \delta^v(v,\dletter))$, where 
$b'=\true$ if $b=\true$ or $\delta^v(v,\dletter) \not \in W^v$, and 
$b'=\false$ otherwise, and  ${F^v}' = \{(b,v) \in V' \mid b=\false\}$.

\noindent
\textbf{Step 3. Constructing the Deviation Monitor $\mathcal{T}$:}
We build $\mathcal{T} = (T, t_0, \doutalph \times \doutalph, \delta^t)$ 
to monitor the deviation of the shield's output from the design's 
output. Here, $T = \{t_0, t_1\}$ and $\delta^t(t, (\doutletter, 
\doutletter')) = t_0$ iff $\doutletter = \doutletter'$. That is, 
$\mathcal{T}$ will be in $t_1$ if there was a deviation in the last time 
step, and in $t_0$ otherwise.  This deviation monitor is shown in 
Fig.~\ref{fig:dev_monitor}.

\noindent
\textbf{Step 4. Constructing the Safety Game $\mathcal{G}$:}
Given the monitors $\mathcal{U},\mathcal{V'},\mathcal{T}$ and the 
automaton $\mathcal{Q}$, which represents $\spec$, we construct a safety 
game $\mathcal{G} = (G, g_0, \dinalph \times \doutalph, \doutalph, $ 
$\delta^g, F^g)$, which is the synchronous product of $\mathcal{U}$, 
$\mathcal{T}$, $\mathcal{V}'$ and $\mathcal{Q}$, such that $G= U \times 
T \times V' \times Q$ is the state space, $g_0 = (u_0, t_0, v_0', q_0)$ 
is the initial state, $\dinalph\times\doutalph$ is the input of the 
shield, $\doutalph$ is the output of the shield, $\delta^g$ is the 
next-state function, and $F^g$ is the set of safe states, such that
$\delta^g\bigl((u, t, v', q), (\dinletter, \doutletter), 
\doutletter'\bigr) = $
\[
\bigl( \delta^u(u,(\dinletter, \doutletter)), 
       \delta^t(t,(\doutletter, \doutletter')), 
       {\delta^v}'(v',(\dinletter, \doutletter)), 
       \delta^q(q, (\dinletter, \doutletter'))
\bigr),
\]
and $F^g = \{(u, t, v', q)\in G \mid 
             v' \not\in {F^v}' \vee 
             ((q \in F^q) \wedge (u=(w,0) \rightarrow t=t_0))
           \}$.

In the definition of $F^g$, the term $v' \not\in {F^v}'$ reflects our 
assumption that $\design \models \specv$.  If this assumption is 
violated, then $v' \not\in {F^v}'$ will hold forever, and our shield is 
allowed to behave arbitrarily.  This is exploited by our synthesis 
algorithm to find a more succinct shield  by treating such states as 
\emph{don't cares}.  If $v' \in {F^v}'$, we require that $q \in F^q$, 
i.e., it is a safe state in $\mathcal{Q}$, which ensures that the shield 
output will satisfy $\spec$. The last term ensures that the shield can 
only deviate in the $k$-step recovery period, i.e., while $c\neq 0$ in 
$\mathcal{U}$.  If the design makes a second mistake within this period, 
$\mathcal{U}$ enters $u_E$ and arbitrary deviations are allowed.  Yet, 
the shield will still enforce $\spec$ in this mode (unless $\design 
\not\models \specv$).

\noindent
\textbf{Step 5. Solving the Safety Game:}  
We use standard algorithms for safety games (cf.  e.g.~\cite{Mazala01}) 
to compute a winning strategy $\rho$ for $\mathcal{G}$.  Then, we 
implement this strategy in a new reactive system $\shield = (G, g_0, 
\dalph, \doutalph, \delta, \rho)$ with $\delta(g, \dletter) = 
\delta^g(g, \dletter,\rho(g,\dletter))$. $\shield$ is the 
$k$-stabilizing generic shield.  If no winning strategy exists, we 
increase $k$ and try again.  In our experiments, we start with $k=1$ and 
then increase $k$ by 1 at a time.

\begin{theorem}
Let $\spec = \{\spec_1,\ldots,\spec_l\}$ be a set of critical safety 
properties $\spec_i = (\states_i, {\init}_i, $ $\dalph, \delta_i, 
F_i)$, and let $\specv \subseteq \spec$ be a subset of valid properties. 
 Let $|V| = \prod_{\spec_i \in \specv} |\states_i|$ be the cardinality 
of the product of the state spaces of all properties of $\specv$. 
Similarly, let $|R| = \prod_{\spec_i \not\in \specv} |\states_i|$. A 
$k$-stabilizing generic shield with respect to $\spec \setminus \specv$ 
and $\specv$ can be synthesized in $O(k^2 \cdot 2^{2|R|} \cdot |V|^4 
\cdot |R|^2)$ time
(if one exists).
\end{theorem}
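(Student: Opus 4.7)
The plan is to analyze the complexity of each of the five construction steps from Section~\ref{sec:sol} and then bound the cost of solving the final safety game, showing that this last step dominates and matches the stated bound.

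First I would bound the size of each automaton produced by Steps~1--3. Since $\mathcal{R}$ is the synchronous product of the properties in $\spec\setminus\specv$, its state space has cardinality $|R|$. Step~1-a solves a safety game on $\mathcal{R}$, which runs in time polynomial in $|R|$ (e.g., $O(|R|^2)$ using standard attractor computation, cf.~\cite{Mazala01}). Step~1-b performs a subset construction, blowing the state space up to at most $2^{|R|}$, and Step~1-c multiplies by $k{+}1$ for the counter and adds the fail-safe state $u_E$, giving $|U| = O(k \cdot 2^{|R|})$. Analogously, $\mathcal{V}$ has $|V|$ states, Step~2 solves a safety game on $\mathcal{V}$ in time polynomial in $|V|$, and adding a single bit yields $|V'| = O(|V|)$. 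Step~3 produces the constant-size deviation monitor with $|T|=2$. Finally, $\mathcal{Q}$, being the product of all $\spec_i$, has $|Q| = |V| \cdot |R|$ states.

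Next I would bound the size of the game $\mathcal{G}$ built in Step~4. Since $\mathcal{G}$ is the synchronous product of $\mathcal{U}$, $\mathcal{T}$, $\mathcal{V}'$ and $\mathcal{Q}$, its state space satisfies
\[
|G| \;=\; |U| \cdot |T| \cdot |V'| \cdot |Q|
\;=\; O\bigl( k \cdot 2^{|R|} \bigr) \cdot 2 \cdot O(|V|) \cdot (|V|\cdot|R|)
\;=\; O\bigl( k \cdot 2^{|R|} \cdot |V|^2 \cdot |R| \bigr).
\]
Constructing $\delta^g$ and $F^g$ takes time linear in $|G|$ (per input letter), since each transition is defined componentwise from the monitors already computed.

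For Step~5 I would invoke the standard result that a safety game can be solved in time quadratic in the size of its game graph (again by a backward attractor computation on the set of unsafe states, cf.~\cite{Mazala01}). Thus solving $\mathcal{G}$ and extracting the memoryless winning strategy $\rho$ takes time
\[
O(|G|^2) \;=\; O\bigl( k^2 \cdot 2^{2|R|} \cdot |V|^4 \cdot |R|^2 \bigr).
\]
Since every earlier step runs in time polynomial in $|R|$, $|V|$, $k$, and $|G|$, and is therefore dominated by this quantity, the overall synthesis runs within the claimed bound. Correctness of the resulting shield (i.e., that $\shield$ is indeed a $k$-stabilizing generic shield whenever $\rho$ exists) follows from the construction of $\mathcal{U}$, $\mathcal{V}'$, $\mathcal{T}$, and $F^g$, as justified informally in Section~\ref{sec:k-stab}.

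The main obstacle I expect is purely bookkeeping: making sure that the subset blow-up really is only over $R$ (not over $Q$ or $V$), that the counter and fail-safe state contribute only the factor $k$ (rather than $k^2$), and that the winning-region computations performed in Steps~1-a and~2 are absorbed into the final $O(|G|^2)$ bound rather than added separately. Once those size estimates are pinned down, the theorem follows by a direct multiplication.
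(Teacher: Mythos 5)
Your proposal is correct and follows essentially the same route as the paper: bound the product state space of $\mathcal{G}$ by $O(k \cdot 2^{|R|} \cdot |V|^2 \cdot |R|)$ and then apply the standard safety-game solving bound (linear in states plus edges, hence quadratic in the number of states since the edge count is at most the square of the state count), which yields the claimed complexity. Your additional bookkeeping about Steps~1--3 being dominated is implicit in the paper's two-sentence proof but changes nothing substantive.
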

\begin{proof}
Safety games can be solved in $O(x + y)$ time~\cite{Mazala01}, where $x$ 
is the number of states and $y$ is the number of edges in the game 
graph.  Our safety game $\mathcal{G}$ has at most $x = ((k+1) \cdot 
2^{|R|} + 1) \cdot (2 \cdot |V|) \cdot 2 \cdot (|R| \cdot |V|)$ states, 
so at most $y = x^2$ edges. 
\end{proof}

\noindent
\textbf{Variations.}
The assumption that no second violation occurs within the recovery 
period increases the chances that a $k$-stabilizing shield exists.  
However, it can also be dropped with a slight 
modification of $\mathcal{U}$ in Step 1: if a violation is committed and 
$c > 1$, we set $c$ to $k$ instead of visiting $u_E$.  This ensures 
that synthesized shields will handle violations within a 
recovery period normally.  
The assumption that the design meant to give one of the allowed outputs 
if a violation occurs can also be relaxed.  Instead of continuing to 
monitor the behavior from the allowed next states, we can just continue 
from the set of all states, i.e., traverse to state $(R,k)$ in 
$\mathcal{U}$.  The assumption that $\design\models\specv$, i.e., the 
design satisfies some properties, is also optional.  By removing 
$\mathcal{V}$ and $\mathcal{V}'$, the construction can be 
simplified at the cost of less implementation freedom for the shield.

By solving a \buchi game (which is potentially more expensive) instead 
of a safety game, we can also eliminate the need to increase $k$ 
iteratively until a solution is found. 
\ifextended
This is outlined in Appendix~\ref{sec:app}.
\else
This is outlined in the appendix of an extended version~\cite{extended} 
of this paper.
\fi

\section{Experiments} 
\label{sec:exp}

We have implemented the $k$-stabilizing shield synthesis procedure in a 
proof-of-concept tool.  Our tool takes as input a set of safety 
properties, defined as automata in a simple textual representation. The 
product of these automata, as well as the subset construction in Step 1 
of our procedure is done on an explicit representation.  The remaining 
steps are performed symbolically using Binary Decision Diagrams (BDDs).  
Synthesis starts with $k=1$ and increments $k$ in case of 
unrealizability until a user-defined bound is hit.  Our tool is written 
in Python and uses CUDD~\cite{Somenz95} as the BDD library.  Our tool 
can output shields in Verilog and SMV. It can also use the model checker 
VIS~\cite{VIS96} to verify that the synthesized shield is correct.

We have conducted three sets of experiments, where the benchmarks are 
(1) selected properties for a traffic light controller from the 
VIS~\cite{VIS96} manual, (2) selected properties for an ARM AMBA bus 
arbiter~\cite{BloemJPPS12}, and (3) selected properties from LTL 
specification patterns~\cite{DwyerAC99}.  None of these examples makes 
use of $\specv$, i.e., $\specv$ is always empty. The source code of our 
proof-of-concept synthesis tool as well as the input files and 
instructions to reproduce our experiments are available for 
download\footnote{\scriptsize\url{ 
http://www.iaik.tugraz.at/content/research/design_verification/others/}
}.

\noindent
\textbf{Traffic Light Controller Example.}
We used the safety specification in Fig.~\ref{fig:traffic_spec} as 
input, 
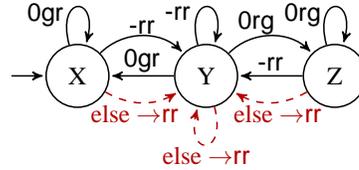
\begin{wrapfigure}[7]{r}{0.42\textwidth}
\vspace{-0.9cm}
\scalebox{0.95}{\begin{tikzpicture}[auto,node distance=\nd]
\definecolor{darkred}{rgb}{0.7, 0.0, 0.0}
\node[state,initial]  at  (0,0)         (X) {X};
\node[state]          at  (1.8,0)       (Y) {Y};
\node[state]          at  (3.6,0)       (Z) {Z};

\path
(X) edge [loop above]
    node[xshift=-4.5mm,yshift=-4mm] {\textsf{0gr}} (X)
(Y) edge [loop above]                 
    node[xshift=-4mm,yshift=-3mm] {\textsf{-rr}} (Y)
(Y) edge [loop below,dashed,color=darkred]                 
    node[xshift=0mm,yshift=1mm] {else $\rightarrow$\textsf{rr}} (Y)    
(Z) edge [loop above]                 
    node[xshift=-4.5mm,yshift=-4mm] {\textsf{0rg}} (Z)
(X) edge [bend angle=45, bend left]   
    node[xshift=0mm,yshift=-1mm] {\textsf{-rr}} (Y)
(X) edge [bend angle=30, bend right, dashed,color=darkred]       
    node[xshift=-1mm,yshift=-4mm] {else $\rightarrow$\textsf{rr}} (Y)
(Y) edge                              
    node[xshift=0mm,yshift=5mm] {\textsf{0gr}} (X)
(Z) edge                   
    node[xshift=0mm,yshift=4mm] {\textsf{-rr}} (Y)
(Z) edge [bend angle=30, bend left, dashed,color=darkred]       
    node[xshift=1mm,yshift=1mm] {else $\rightarrow$\textsf{rr}} (Y)
(Y) edge [bend angle=45, bend left]                  
    node[xshift=-2mm,yshift=-1mm] {\textsf{0rg}} (Z);
\end{tikzpicture}}
\vspace{-0.3cm}
\caption{Traffic light shield.}
\label{fig:traffic_shield}
\end{wrapfigure}
for which our tool generated a $1$-stabilizing shield within a fraction 
of a second.  The shield has 6 latches and 95 (2-input) multiplexers, 
which is then reduced by ABC~\cite{BraytonM10} to 5 latches and 41 
(2-input) AIG gates.  However, most of the states are either unreachable 
or equivalent.  The behavior of the shield is illustrated in 
Fig.~\ref{fig:traffic_shield}.  Edges are labeled with the inputs of the 
shield.  Red dashed edges denote situations where the output of the 
shield is different from its inputs.  The modified output is written 
after the arrow.  For all non-dashed edges, the input is just copied to 
the output.  Clearly, the states X, Y, and Z correspond to H, B, and F 
in Fig.~\ref{fig:traffic_spec}.

We also tested the synthesized shield using the traffic light controller 
of~\cite{vlsi}, which also appeared in the user manual of 
VIS~\cite{VIS96}. This controller has one input (\textsf{car}) from a 
car sensor on the farm road, and uses a timer to control the length of 
the different phases.  We set the ``short'' timer period to one tick and 
the ``long'' period to two ticks.  

\begin{wrapfigure}[8]{r}{0.58\textwidth}
\vspace{-0.9cm}
\scalebox{0.8}{\begin{tikzpicture}[auto,node distance=\nd]
\definecolor{darkred}{rgb}{0.7, 0.0, 0.0}
\node[state,initial above,inner sep=0pt,align=center] at  (0,0)  (S0)  {S0\\gr};
\node[state,inner sep=0pt,align=center]          at  (1.2,0.3)   (S1)  {S1\\gr};
\node[state,inner sep=0pt,align=center]          at  (2.4,0.6)   (S2)  {S2\\gr};
\node[state,inner sep=0pt,align=center]          at  (3.8,1.0)   (S3)  {S3\\rr};
\node[state,inner sep=0pt,align=center]          at  (5.0,0.6)   (S4)  {S4\\rr};
\node[state,inner sep=0pt,align=center]          at  (6.2,0.3)   (S5)  {S5\\rg};
\node[state,inner sep=0pt,align=center]          at  (6.2,1.5)   (S6)  {S6\\rg};
\node[state,inner sep=0pt,align=center]          at  (7.8,1.5)   (S7)  {S7\\rg};
\node[state,inner sep=0pt,align=center]          at  (7.8,0)     (S8)  {S8\\rr};
\node[state,inner sep=0pt,align=center]          at  (3.8,0)     (S9)  {S9\\rr};

\path
(S0) edge (S1)
(S1) edge (S2)
(S2) edge [loop above,looseness=6] 
     node[xshift=6mm,yshift=-4mm] {$\neg$car} (S2)
(S2) edge               
     node[xshift=4.1mm,yshift=-4mm] {car} (S3)
(S3) edge (S4)
(S4) edge (S5)
(S5) edge 
     node[xshift=-5mm,yshift=-0.7mm] {$\neg$car} (S8)
(S5) edge               
     node[xshift=0mm,yshift=0mm] {car} (S6)
(S6) edge               
     node[xshift=-4.4mm,yshift=1mm] {$\neg$car} (S8)
(S6) edge               
     node[xshift=0mm,yshift=0mm] {car} (S7)
(S6) edge [bend right,dashed,color=darkred] 
     node[xshift=10mm,yshift=1.5mm] {$\neg$car} (S0)
(S7) edge (S8)
(S8) edge [bend angle=12, bend left]   (S9)
(S9) edge [bend angle=12, bend left]   (S0)
;
\end{tikzpicture}}
\caption{Traffic light implementation.}
\label{fig:traffic_impl}
\end{wrapfigure}
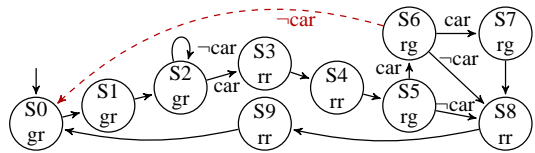

The resulting behavior without preemption is visualized in 
Fig.~\ref{fig:traffic_impl}, where nodes are labeled with names and 
outputs, and edges are labeled with conditions on the inputs.  The red 
dashed arrow represents a subtle bug we introduced: if the last car on 
the farm road exits the crossing at a rare point in time, then the 
controller switches from \textsf{rg} to \textsf{gr} without passing 
\textsf{rr}. This bug only shows up in very special situations, so it 
can go unnoticed easily.  Preemption is implemented by modifying both 
directions to \textsf{r} without changing the state if $\textsf{p}=1$.  
We introduced another bug here as well: only the highway is switched to 
\textsf{r} if $\textsf{p}=1$, whereas the farm road is not. This bug can 
easily go unnoticed as well, because the farm road is mostly red anyway.
The following trace illustrates how the synthesized shield handles these 
errors:

\begin{center}
\setlength{\tabcolsep}{1.3pt}
\begin{tabular}{l|c|c|c|c|c|c|c|c|c|c|c|c|c|c|c|c}\hline
Step                                 &0 &1 &2 &3 &4 &5 &6 &7     &8  &9 &10&11&12    &13&14&15    \\
\hline
State in Fig.~\ref{fig:traffic_spec} (safety spec.)
                                     &H &H &B &H &B &B &F &F     &F,B&H &H &B &B     &B &B &\ldots\\
State in Fig.~\ref{fig:traffic_impl} (buggy design)
                                     &S0&S1&S2&S3&S4&S5&S6&S0    &S1 &S2&S3&S4&S5    &S8&S9&\ldots\\
State in Fig.~\ref{fig:traffic_shield} (shield)
                                     &X &X &Y &X &Y &Y &Z &Z     &Y  &X &X &Y &Y     &Y &Y &\ldots\\

Input (\textsf{p,car})               &00&11&01&01&01&01&00&00    &00 &01&01&00&10    &00&00&\ldots\\
Design output                        &gr&rr&gr&rr&rr&rg&rg&gr\err&gr &gr&rr&rr&rg\err&rr&rr&\ldots\\
Shield output                        &gr&rr&gr&rr&rr&rg&rg&rr    &gr &gr&rr&rr&rr    &rr&rr&\ldots\\
\hline
\end{tabular}
\end{center}

\noindent
The first bug strikes at Step 7. The shield corrects it with output 
\textsf{rr}. A $2$-stabilizing shield could also have chosen 
\textsf{rg}, but this would have made a second deviation necessary in 
the next step. Our shield is $1$-stabilizing, i.e., it deviates only at 
the step of the violation. After this correction, the shield continues 
monitoring the design from both state F and state B of 
Fig.~\ref{fig:traffic_spec}, as explained earlier, to detect future 
errors. Yet, this uncertainty is resolved in the next step.  The second 
bug in Step 12 is simpler: outputting \textsf{rr} is the only way to 
correct it, and the next state in Fig.~\ref{fig:traffic_spec} must be B.

When only considering the properties 1 and 2 from Section~\ref{sec:ex}, 
the synthesized shield has no latches and three AIG gates after 
optimization with ABC~\cite{BraytonM10}.

\noindent
\textbf{ARM AMBA Bus Arbiter Example.}
We used properties of an ARM AMBA bus arbiter~\cite{BloemJPPS12} as 
input to our shield synthesis tool.  Due to page limit, we only present 
the result on one example property, and then present the performance 
results for other properties. 
The property that we enforced was Guarantee 3 from the specification 
of~\cite{BloemJPPS12}, which says that if a length-four locked burst 
access starts, no other access can start until the end of this burst. 
The safety automaton is shown in Fig.~\ref{fig:amba_g3}, where 
\texttt{B}, \texttt{s} and \texttt{R} are short for $\texttt{hmastlock} 
\wedge \texttt{HBURST=BURST4}$, \texttt{start}, and \texttt{HREADY}, 
respectively.  Lower case signal names are outputs, and upper-cases are 
inputs of the arbiter.  S$_x$ is unsafe. S0 is the idle state waiting 
for a burst to start ($\texttt{B} \wedge \texttt{s}$). The burst is over 
if input \texttt{R} has been $\true$ $4$ times.  State S$i$, where 
$i=1,2,3,4$, means that \texttt{R} must be $\true$ for $i$ more times.  
The counting includes the time step where the burst starts, i.e., where 
S0 is left.  Outside of S0, \texttt{s} is required to be $\false$.

\begin{figure}[tb]
\centering
\begin{minipage}{0.47\linewidth}
\vspace{-0.3cm}
\hspace{-1ex}
\scalebox{0.8}{\begin{tikzpicture}[auto,node distance=\nd]
\node[state,initial above] at  (0,0)       (S0) {S0};
\node[state]               at  (0.6,-1.3)    (S4) {S4};
\node[state]         at  (2.3,0)     (S3) {S3};
\node[state]         at  (4.2,0)     (S2) {S2};
\node[state]         at  (6.1,0)     (S1) {S1};
\node[state]         at  (4.2,-1.3)  (SX) {S$_x$};

\path
(S0) edge [in=100,out=130,loop]  
     node [xshift=12mm,yshift=-1mm] 
     {$\neg (\texttt{B} \wedge \texttt{s})$} (S0)
(S0) edge               
     node [xshift=-7mm,yshift=-1.5mm]  
     {$\texttt{B} \wedge \texttt{s} \wedge \neg\texttt{R}$} (S4)   
(S0) edge               
     node [xshift=0mm,yshift=0mm]  
     {$\texttt{B} \wedge \texttt{s} \wedge \texttt{R}$} (S3)
     
(S4) edge               
     node [rotate=35,xshift=5.7mm,yshift=-4mm]  
     {$\neg \texttt{s} \wedge \texttt{R}$} (S3) 
(S4) edge [in=-40,out=-10,loop]  
     node [xshift=-1mm,yshift=3mm] 
     {$\neg \texttt{s} \wedge \neg \texttt{R}$} (S4)    
(S4) edge  
     node [xshift=0mm,yshift=0mm] 
     {$\texttt{s}$} (SX)
     
(S3) edge               
     node [xshift=0mm,yshift=0mm]  
     {$\neg \texttt{s} \wedge \texttt{R}$} (S2) 
(S3) edge [loop above]  
     node [xshift=8mm,yshift=-5mm] 
     {$\neg \texttt{s} \wedge \neg \texttt{R}$} (S3)
(S3) edge  
     node [xshift=-1mm,yshift=-1mm] 
     {$\texttt{s}$} (SX)
     
(S2) edge               
     node [xshift=0mm,yshift=0mm]  
     {$\neg \texttt{s} \wedge \texttt{R}$} (S1) 
(S2) edge [loop above]  
     node [xshift=8mm,yshift=-5mm] 
     {$\neg \texttt{s} \wedge \neg \texttt{R}$} (S2)
(S2) edge  
     node [xshift=-1mm,yshift=1mm] 
     {$\texttt{s}$} (SX)
     
(S1) edge [in=60, out=90, controls=+(90:1.5) and +(60:1.5)]              
     node [xshift=0mm,yshift=1mm]  
     {$\neg \texttt{s} \wedge \texttt{R}$} (S0) 
(S1) edge [loop below]  
     node [xshift=-3mm,yshift=0mm] 
     {$\neg \texttt{s} \wedge \neg \texttt{R}$} (S1) 
(S1) edge  
     node [xshift=-1mm,yshift=1mm] 
     {$\texttt{s}$} (SX)
     
(SX) edge [in=-10,out=-40,loop]  
     node [xshift=8mm,yshift=-2mm] 
     {$\true$} (SX)  

     
;
\end{tikzpicture}}
\vspace{-0.4cm}
\caption{Guarantee 3 from~\cite{BloemJPPS12}.}
\label{fig:amba_g3}
\end{minipage}
\begin{minipage}{0.507\linewidth}
\vspace{-2mm}
\setlength{\tabcolsep}{1.1pt}
\begin{tabular}{l|c|c|c|c|c|c|c|c|c|c}\hline
Step                           &3 &4 &5 &6 &7    &8 &9 &10&11&12    \\
\hline
State in Fig.~\ref{fig:amba_g3}&S0&S4&S3&S2&S1   &S0&S0&S0&S0&\ldots\\
State in Design                &S0&S3&S2&S1&S0   &S3&S2&S1&S0&\ldots\\
\texttt{B}                     &1 &1 &1 &1 &1    &1 &1 &1 &1 &\ldots\\
\texttt{R}                     &0 &1 &1 &1 &1    &1 &1 &1 &1 &\ldots\\
\texttt{s} from Design         &1 &0 &0 &0 &1\err&0 &0 &0 &0 &\ldots\\
\texttt{s} from Shield         &1 &0 &0 &0 &0    &0 &0 &0 &0 &\ldots\\
\hline
\end{tabular}
\caption{Shield execution results.}
\label{fig:amba_g3_result}
\end{minipage}
\end{figure}

Our tool generated a 1-stabilizing shield within a fraction of a second. 
The shield has 8 latches and 142 (2-input) multiplexers, which is then 
reduced by ABC~\cite{BraytonM10} to 4 latches and 77 AIG gates. We 
verified it against an arbiter implementation for 2 bus masters, where 
we introduced the following bug: the design does not check \texttt{R} 
when the burst starts, but behaves as if \texttt{R} was $\true$.  This 
corresponds to removing the transition from S0 to S4 in 
Fig.~\ref{fig:amba_g3}, and going to S3 instead.  An execution trace is 
shown in Fig.~\ref{fig:amba_g3_result}.  The first burst starts with 
$\texttt{s}=\true$ in Step 3.  \texttt{R} is $\false$, so the design 
counts wrongly.  The erroneous output shows up in Step 7, where the 
design starts the next burst, which is forbidden, and thus blocked by 
the shield.  The design now thinks that it has started a burst, so it 
keeps $\texttt{s}=\false$ until \texttt{R} is $\true$ 4 times.  
Actually, this burst start has been blocked by the shield, so the shield 
waits in S0.  Only after the suppressed burst is over, the components 
are in sync again, and the next burst can start normally.

\begin{wraptable}[10]{r}{0.46\textwidth}
\centering
\setlength{\tabcolsep}{5pt}
\vspace{-5ex}
\caption{Performance for AMBA~\cite{BloemJPPS12}.}
\vspace{-0.3cm}
\label{tab:perf}
\scalebox{0.8}{\begin{tabular}{|lccccc|}
\hline
Property        &$|Q|$ &$|I|$    &$|O|$      &$k$ &Time [sec] \\
\hline
G1              &3     &1        &1         &1   &0.1        \\
G1+2            &5     &3        &3         &1   &0.1        \\
G1+2+3          &12    &3        &3         &1   &0.1        \\
G1+2+4          &8     &3        &6         &2   &7.8        \\
G1+3+4          &15    &3        &5         &2   &65         \\
G2+3+4          &17    &3        &6         &?   &$>$3600    \\
G1+2+3+5        &18    &3        &4         &2   &242        \\
G1+2+4+5        &12    &3        &7         &?   &$>$3600    \\
G1+3+4+5        &23    &3        &6         &?   &$>$3600    \\
\hline
\end{tabular}}
\end{wraptable}
To evaluate the performance of our tool, we ran a stress test with 
increasingly larger sets of safety properties for the ARM AMBA bus 
arbiter in~\cite{BloemJPPS12}.  Table~\ref{tab:perf} summarizes the 
results.  The columns list the number of states, inputs, and outputs, 
the minimum $k$ for which a $k$-stabilizing shield exists, and the 
synthesis time in seconds.  All experiments were performed on a machine 
with an Intel i5-3320M CPU@2.6 GHz, 8 GB RAM, and a 64-bit Linux.  
Time-outs (G2+3+4, G1+2+4+5 and G1+3+4+5) occurred only when the number 
of states and input/output signals grew large.  However, this should not 
be a concern in practice because the set of critical properties of a 
system is usually much smaller, e.g., often consisting of invariance 
properties with a single state.  

\begin{wraptable}[18]{r}{0.631\textwidth}
\centering
\setlength{\tabcolsep}{1pt}
\vspace{-5ex}
\caption{Synthesis results for the LTL patterns~\cite{DwyerAC99}.}
\label{tab:dwyer}
\vspace{-2ex}
\scalebox{0.85}{
\begin{tabular}{|l|l|c||c|c|c|c|}
\hline
Nr. &Property   &$b$&$|Q|$ &Time      &\#Lat-    &\#AIG-      \\
    &           &   &      &[sec]     &ches      &Gates       \\
\hline
1   &$\always \neg p$
                &-  &2     &0.01      &0         &0           \\
2   &$\eventually r \rightarrow (\neg p \until r)$
                &-  &4     &0.34      &2         &6           \\
3   &$\always(q \rightarrow \always(\neg p))$
                &-  &3     &0.34      &2         &6           \\
4   &$\always((q \wedge \neg r \wedge \eventually r) \rightarrow 
              (\neg p \until r))$
                &-  &4     &0.34      &1         &9           \\
5   &$\always(q \wedge \neg r \rightarrow (\neg p \weakuntil r))$
                &-  &3     &0.01      &2         &14          \\
6   &$\eventually p$
                &0  &3     &0.34      &1         &1           \\
6   &$\eventually p$
                &256&259   &33        &18        &134         \\
7   &$\neg r \weakuntil (p \wedge \neg r)$
                &-  &3     &0.05      &3         &11          \\
8   &$\always(\neg q) \vee \eventually(q \wedge \eventually p)$
                &0  &3     &0.04      &3         &11          \\
8   &$\always(\neg q) \vee \eventually(q \wedge \eventually p)$
                &4  &7     &0.04      &6         &79          \\
8   &$\always(\neg q) \vee \eventually(q \wedge \eventually p)$
                &16 &19    &0.03      &10        &162         \\
8   &$\always(\neg q) \vee \eventually(q \wedge \eventually p)$
                &64 &67    &0.37      &14        &349         \\
8   &$\always(\neg q) \vee \eventually(q \wedge \eventually p)$
                &256&259   &34        &18        &890         \\
9   &$\always(q \wedge \neg r \rightarrow 
              (\neg r \weakuntil (p \wedge \neg r))) $
                &-  &3     &0.05      &2         &12          \\
10   &$\always(q \wedge \neg r \rightarrow 
              (\neg r \until (p \wedge \neg r))) $
                &12 &14    &5.4       &14        &2901        \\
10   &$\always(q \wedge \neg r \rightarrow 
              (\neg r \until (p \wedge \neg r))) $
                &14 &16    &38        &15        &6020        \\
10   &$\always(q \wedge \neg r \rightarrow 
              (\neg r \until (p \wedge \neg r))) $
                &16 &18    &377       &18        &13140       \\
\hline                
\end{tabular}
}
\end{wraptable}

\noindent
\textbf{LTL Specification Patterns.}
Dwyer et al.~\cite{DwyerAC99} studied the frequently used LTL 
specification patterns in verification.  As an exercise, we applied our 
tool to the first 10 properties from their list~\cite{LTLpattern-URL} 
and summarized the results in Table~\ref{tab:dwyer}.  For a property 
containing liveness aspects (e.g., something must happen eventually), we 
imposed a bound on the reaction time to obtain the safety 
(bounded-liveness) property. The bound on the  reaction time is shown in 
Column~3.  The last four columns list the number of states in the safety 
specification, the synthesis time in seconds, and the shield size 
(latches and AIG gates). Overall, our method runs sufficiently fast on 
all properties and the resulting shield size is small. We also 
investigated how the synthesis time increased with an increasingly 
larger bound $b$.  For Property 8 and Property 6, the run time and 
shield size remained small even for large automata. For Property 10, the 
run time and shield size grew faster, indicating room for further 
improvement.  As a proof-of-concept implementation, our tool has not yet 
been optimized specifically for speed or shield size -- we leave such 
optimizations for future work.

\section{Conclusions}
\label{sec:conc}

We have formally defined the shield synthesis problem for reactive 
systems and presented a general framework for solving the problem.  We 
have also implemented a new synthesis procedure that solves  a concrete 
instance of  this problem, namely the synthesis of $k$-stabilizing 
generic shields. We have evaluated our new method  on two hardware 
benchmarks and a set of LTL specification patterns.  We believe that our 
work points to an exciting new direction for applying synthesis, because 
the set of critical properties of a complex system tends to be small and 
relatively easy to specify, thereby making shield synthesis scalable and 
usable. Many interesting extensions and variants remain to be explored, 
both theoretically and experimentally, in the future.

\bibliography{references}
\ifextended
\newpage
\appendix

\section{Synthesis of Stabilizing Generic Shields}
\label{sec:app}

In this section, we present a method for synthesizing $k$-stabilizing 
shields with arbitrary but finite $k$.  We call such shields 
\emph{stabilizing} (without the ``$k$'').  A synthesis procedure for 
stabilizing shields is also useful as a preprocessing step if we want to 
enforce a particular (or minimal) $k$: Even for a realizable 
specification, the $k$-stabilizing shield synthesis problem may be 
unrealizable for any finite $k$.  When specification $\spec$ is 
realizable, there exists a reactive system $\design'$ such that 
$\design' \models \spec$.  However, it does not mean that a shield 
$\shield$ exists for any design $\design$, such that $(\design \comp 
\shield) \models \spec$, and $(\design \comp \shield)$ deviates from 
$\design$ for at most $k$ time steps.

\begin{exa}
Consider the safety specification on the right, where $o_1$ and $o_2$ 
are 
\begin{wrapfigure}[6]{r}{0.42\textwidth}
\vspace{-1.0cm}
\begin{tikzpicture}[auto,node distance=\nd]
\node[state,initial above] at  (0,0)       (r0) {$r_0$};
\node[state]               at  (2,0.8)   (r1) {$r_1$};
\node[state]               at  (2,-0.6)  (r2) {$r_2$};
\node[state]               at  (4.2,0)  (rx) {$r_x$};
\path
(r0) edge               
     node [rotate=21,xshift=7.5mm,yshift=0mm]  {$o_1 \wedge \neg o_2$} (r1)
(r0) edge               
     node [rotate=-20,xshift=-4mm,yshift=-4mm] {$\neg o_1$} (r2)
(r0) edge               
     node [xshift=-3mm,yshift=-1mm] {$o_1 \wedge o_2$} (rx)
(r1) edge [in=30,out=0,loop]              
     node [xshift=15mm,yshift=2mm]  {$o_1 \wedge \neg o_2$} (r1)
(r2) edge [loop right]              
     node [xshift=-5mm,yshift=-3.5mm]  {$\neg o_1$} (r2)
(r2) edge          
     node [rotate=20,xshift=4mm,yshift=-4mm]  {$o_1$} (rx)
(rx) edge [loop below,looseness=6]              
     node [xshift=0mm,yshift=1mm]  {$\true$} (rx)
(r1) edge          
     node [rotate=-18,xshift=-7.0mm,yshift=-0.7mm]  {$\neg o_1 \vee o_2$} (rx)
     
;
\end{tikzpicture}
\end{wrapfigure}
outputs, and $r_x$ is unsafe.  The design must produce either $o_1 
\wedge \neg o_2$ globally or $\neg o_1$ globally.  The $k$-stabilizing 
shield synthesis problem is unrealizable for any finite $k$: if the 
design produces $o_1 \wedge o_2$ initially, the shield must deviate to 
either $o_1 \wedge \neg o_2$ or $\neg o_2$.  In the former case, the 
design could produce $\neg o_1$ from that point on, in the latter case 
$o_1 \wedge \neg o_2$.  This would cause an indefinite deviation with 
only a single violation.  \qed

\end{exa}

Whether a $k$-stabilizing shield exists for some finite $k$ is difficult 
to detect with the synthesis procedure from Section~\ref{sec:k-stab}.  
In case of unrealizability of the shield for a given $k$, we cannot know 
if we just need to increase $k$, or if no finite $k$ would work.  The 
synthesis process presented in the following sub-section will decide the 
realizability problem.  We can also synthesize a stabilizing shield, 
measure its $k$, and minimize this $k$ further with the procedure from 
Section~\ref{sec:k-stab} until we hit the unrealizability barrier.

\subsection{Construction for Synthesizing Stabilizing Shields} 
\label{sec:stab-app}

A generic stabilizing shield can be synthesized (if one exists) with 
only a few modifications to the procedure from Section~\ref{sec:k-stab}. 
Instead of a counter $c\in\{0,\ldots, k\}$, we use a counter 
$d\in\{0,1,2\}$ with only three different values.  Intuitively, $d=2$ is 
an abstraction for $c > 1$.  
We construct a \buchi game that is won if $d\leq 1$ infinitely often 
(and all the other shield requirements are satisfied).  A \buchi game is 
like a safety game, but the given set of final states must be visited 
infinitely often for the system to win the game.  A winning strategy for 
this \buchi game corresponds to a $k$-stabilizing shield with some 
finite $k$, and the $k$ can even be computed during synthesis.  The 
construction is similar to Section~\ref{sec:k-stab}, with only a few 
modifications:

\noindent
\textbf{Step 1.} Instead of using a counter $c\in\{0,\ldots, k\}$, we 
use a three-valued counter $d\in\{0,1,2\}$ to track whether we are 
currently in the recovery phase or not. Intuitively, $d=2$ if $c$ would 
be $>1$. That is, $d$ is $0$ initially.  If $d < 2$ and the design 
makes a mistake (leaves $W^r$), then $d$ is set to $2$.  If it was 
already $2$, we enter $u_E$.  In order to decide when to decrement $d$ 
from $2$ to $1$, we add a special output $r$ to the shield.  If this 
output is set to $\true$ and $d=2$, then $d$ is set to $1$ in the next 
step. The behavior for $d=1$ is the same as in Section~\ref{sec:k-stab}: 
if another violation occurs, $d$ is set to $2$. Otherwise, $d$ is 
decremented to $1$.  We denote this slightly modified violation monitor 
by $\mathcal{U}' = (U', u_0', \dalph \times 2^{\{r\}}, {\delta^u}')$ 
with $U'=(2^{R} \times \{0,1,2\}) \cup {u_E}$. The subsequent steps will 
ensure that the shield will only be allowed to deviate if $d > 0$ in the 
next step.  We will also require that $d$ cannot be $2$ indefinitely.

\noindent
\textbf{Step 2 and Step 3} are performed as described in 
Section~\ref{sec:k-stab}.

\noindent
\textbf{Step 4.} We construct a \buchi game $\mathcal{G}' = (G', g_0', 
\dinalph \times \doutalph, \doutalph \times 2^{\{r\}}, {\delta^g}', 
{F^g}')$ as the synchronous product of $\mathcal{U}'$, $\mathcal{T}$, 
$\mathcal{V}'$ and $\mathcal{Q}$ as follows: 
\begin{itemize}
\item $G'= U' \times T \times V'  \times Q \times \B \times \B$, 
\item $g_0' = (u_0', t_0, v_0', q_0, \false, \false)$,
\item ${\delta^g}'\bigl((u', t, v', q, m, n), (\dinletter, \doutletter), 
(\doutletter',r) \bigr) =
\bigl(
{\delta^u}'(u',((\dinletter, \doutletter),r)),\\
\delta^t(t,(\doutletter, \doutletter')),
{\delta^v}'(v',(\dinletter, \doutletter)),
\delta^q(q, (\dinletter, \doutletter')),m',n'
\bigr)$,
where
 \begin{itemize}
 \item $m'=\true$ iff $m=\true$ or $q\not \in F^q$
 \item $n'=\true$ iff $n=\true$ or $u'=(w,0) \wedge t=t_1$, and
 \end{itemize}
\item ${F^g}' = \{(u',t,v',q,m,n)\in G' \mid v'\not\in {F^v}' \vee
                   (\neg n \wedge \neg m \wedge d \leq 1)\}$.
\end{itemize}
The intuition behind this construction is as follows.  We extend the 
state space of the synchronous product by two bits, $m$ and $n$.  The 
bit $m$ is $\true$ if the execution has ever visited an unsafe state in 
$\mathcal{Q}$.  The bit $n$ is $\true$ if there has been an illegal 
deviation\footnote{Recall that $u'=(w,0)$ means that the counter $d$ 
introduced in Step 1 is $0$, i.e., no deviation was allowed in the 
previous time step; $t=t_1$ indicates that a deviation has occurred in 
the previous time step.}. With this information, the accepting states 
(that need to be visited infinitely often) are then defined as follows. 
Outside of ${F^v}'$, everything is accepting.  This makes sure that the 
shield can behave arbitrarily if $\design\not\models\specv$.  Otherwise, 
a state is accepting if $d\leq 1$ (the last recovery period is over), 
$m$ is $\false$ ($\design\comp\shield \models \spec$ so far) and $n$ is 
$\false$ (no illegal deviations so far).  Visiting ${F^g}'$ infinitely 
often implies that recovery periods are over infinitely often, and $m$ 
and $n$ are never $\true$ (these bits cannot change back to $\false$).

\noindent
\textbf{Step 5.}  Just like safety games, \buchi games also have a 
memoryless strategy.  We compute such a strategy and implement it as 
described in Section~\ref{sec:k-stab}. If no such strategy exists (which 
is easy to detect during synthesis), then this is reported to the user.

\noindent
\textbf{Discussion.}
Note that the \buchi objective ensures that recovery phases are over 
infinitely often, but not that they are bounded in time.  There may 
exist a strategy to satisfy the \buchi objective without any finite 
bound on the recovery time.  E.g., the first recovery phase could take 2 
steps, the second one 4 steps, the third one 8 steps, etc.  However, 
such a strategy would require infinite memory. We construct and 
implement a memoryless strategy, which guarantees a bounded recovery.  
We can even measure the maximum length of any recovery phase while 
synthesizing the shield: \buchi games can be solved with a doubly-nested 
fixpoint computation~\cite{Mazala01}.  The number of iterations of the 
inner fixpoint (in the last iteration of the outer fixpoint) corresponds 
to the maximum number of steps needed to reach a state of ${F^g}'$, 
i.e., a state where the recovery is over.  Hence, this value is also the 
maximum length of a recovery period, i.e., the value $k$ for the 
resulting $k$-stabilizing shield.
\fi
\end{document}